\documentclass[10pt, conference,compsocconf]{IEEEtran}
\IEEEoverridecommandlockouts
\usepackage[utf8]{inputenc}
\usepackage[T1]{fontenc}
\usepackage[english]{babel}
\usepackage{ntheorem}
\usepackage{amsmath,amssymb}
\usepackage{epsfig,psfrag}
\usepackage{graphics,graphicx} 
\usepackage{color}
\usepackage{dsfont}
\usepackage{graphicx}
\usepackage{subfigure}
\usepackage{stmaryrd}
\usepackage{color}
\usepackage{cite}

\begin{document}

\newtheorem{corollary}{Corollary}
\newtheorem{proposition}{Proposition}
\newtheorem{Conjec}{Conjecture}
\newtheorem{definition}{Definition}
\newtheorem{theorem}{Theorem}

\newtheorem{notation}{Notation}



\newcommand{\rel}[0]{\ensuremath{{\mathcal{R}}}}
\newcommand{\Gall}[0]{\ensuremath{\mathcal{G}}}

\newcommand{\harvey}[0]{\ensuremath{\textsf{haRVey}\xspace}}
\newcommand{\barvey}[0]{\ensuremath{\textsf{barvey}\xspace}}
\newcommand{\bamTorv}[0]{\ensuremath{\textsf{bam2rv}\xspace}}
\newcommand{\rvqe}[0]{\ensuremath{\textsf{rvqe}\xspace}}
\newcommand{\Why}[0]{\ensuremath{\textsf{Why}\xspace}}
\newcommand{\Be}[0]{\textsf{B}\xspace}
\newcommand{\Zed}[0]{\ensuremath{\textsf{Z}\xspace}}

\newcommand{\Lang}[0]{\ensuremath{{\cal L}}}
\newcommand{\Sorts}[0]{\ensuremath{{\cal S}}}
\newcommand{\Vars}[0]{\ensuremath{{\cal V}}}
\newcommand{\Funcs}[0]{\ensuremath{{\cal F}}}
\newcommand{\Preds}[0]{\ensuremath{{\cal P}}}
\newcommand{\arite}[0]{\ensuremath{{\alpha}}} 
\newcommand{\sorte}[0]{\ensuremath{{\sigma}}}
\newcommand{\tauTerm}[0]{\textit{$\tau$-terme}\xspace}
\newcommand{\tauITerm}[1]{\textit{$\tau_{#1}$-term}\xspace}
\newcommand{\atome}[0]{\textit{atome}\xspace}
\newcommand{\fpoms}[0]{\textit{fpoms}\xspace}
\newcommand{\VLibre}[1]{\Vars_{\textit{lib}}(#1)}
\newcommand{\congruenceE}[0]{\ensuremath{{\cal E}}}
\newcommand{\preInterpret}[1]{\ensuremath{[ #1 ]}}
\newcommand{\preInterpretSSET}[1]{\preInterpret{#1}}
\newcommand{\preInterpretBAES}[1]{I'(#1)}
\newcommand{\preinterpret}[0]{\preInterpret{\,}}
\newcommand{\valuationSSET}[1]{\valuation(#1)}
\newcommand{\valuationBAES}[1]{\valuation'(#1)}
\newcommand{\valuationX}[1]{\valuation(#1)}
\newcommand{\valuation}[0]{\ensuremath{\rho}\xspace}
\newcommand{\etat}[0]{\ensuremath{\rho}}
\newcommand{\constt}[1]{\ensuremath{\textsf{#1}}}
\newcommand{\termSet}[0]{t}

\newcommand{\sset}[0]{\ensuremath{\mathit{SSET}}}
\newcommand{\axCod}[0]{\ensuremath{\delta}}
\newcommand{\mty}[0]{\ensuremath{\mathsf{mty}}}
\newcommand{\ttrue}[0]{\ensuremath{\mathsf{tt}}}
\newcommand{\tfalse}[0]{\ensuremath{\mathsf{ff}}}

\newcommand{\interpretI}[0]{\ensuremath{{\cal I}}\xspace}
\newcommand{\classInterpret}[0]{\ensuremath{\mathbf{I}\xspace}}
\newcommand{\ltrue}[0]{\ensuremath{\top}}
\newcommand{\lfalse}[0]{\ensuremath{\bot}}
\newcommand{\modele}[0]{\ensuremath{\Vdash}}
\newcommand{\deduc}[0]{\vdash}

\newcommand{\conseq}[0]{\ensuremath{\vDash}}
\newcommand{\herbrandU}[0]{\ensuremath{\mathbbm{H}}\xspace}

\newcommand{\Nats}[0]{\ensuremath{\mathbb{N}}}
\newcommand{\Z}[0]{\ensuremath{\mathbb{Z}}}
\newcommand{\R}[0]{\ensuremath{\mathbb{R}}}
\newcommand{\Bool}[0]{\ensuremath{\mathds{B}}}
\newcommand{\StratSet}[0]{\ensuremath{\mathbb{S}}}

\newcommand{\impliq}[0]{\ensuremath{{\Rightarrow}}}
\newcommand{\equival}[0]{\ensuremath{{\Leftrightarrow}}}

\newcommand{\sortedForall}[3]{\ensuremath{(\forall_{#1} #2\, . \, #3)}}
\newcommand{\sortedExists}[3]{\ensuremath{(\exists_{#1} #2\, . \, #3)}}
\newcommand{\sortedQ}[3]{\ensuremath{(Q_{#1} #2\, . \, #3)}}
\newcommand{\Forall}[2]{\ensuremath{(\forall #1\, . \, #2)}}
\newcommand{\qnnf}[0]{\ensuremath{_{\textit{nf}}}}
\newcommand{\gq}{\ensuremath{\mathsf{gq}}}
\newcommand{\gqe}{\ensuremath{\mathsf{gqe}}}
\newcommand{\bdd}[1]{\textit{bdd}(#1)}

\newcommand{\regleSep}[0]{\mathtt{~\textbf{|}~}}
\newcommand{\regleDeriv}[0]{\mathtt{::=}}

\newcommand{\transfo}[0]{\ensuremath{{\rightarrow}}}
\newcommand{\negativeForm}[1]{\ensuremath{\textit{neg}({#1})}}
\newcommand{\cnf}[1]{\ensuremath{\textit{cnf}({#1})}}
\newcommand{\dnf}[1]{\ensuremath{\textit{dnf}({#1})}}
\newcommand{\polar}[1]{\ensuremath{\textit{pol}(#1)}}

\newcommand{\statesVars}[0]{\ensuremath{X}}
\newcommand{\statesSet}[0]{\ensuremath{{\Sigma}}}
\newcommand{\initialAssertion}[0]{\ensuremath{I}}
\newcommand{\transSymbol}[0]{\ensuremath{{\rightarrow}}}
\newcommand{\transLabel}[0]{\ensuremath{{\textit{act}}}}
\newcommand{\prodASync}[0]{\ensuremath{\otimes}}
\newcommand{\prodSync}[0]{\ensuremath{\oplus}}

\newcommand{\execTS}[1]{\ensuremath{[[ #1 ]]}}

\newcommand{\ite}[3]{\textit{ite}(#1,#2,#3)}
\newcommand{\iteRV}[3]{\mathsf{ite}(#1,#2,#3)}
\newcommand{\eqd}[0]{\ensuremath{=_{\textit{\small{def}}}}}
\newcommand{\instruction}[1]{\xspace{\sf #1}\xspace}
\newcommand{\ifB}[0]{\instruction{if~}}
\newcommand{\thenB}[0]{\instruction{~then~}}
\newcommand{\elseB}[0]{\instruction{~else~}}
\newcommand{\skipB}[0]{\instruction{skip}}

\newcommand{\select}[2]{\ensuremath{\mathsf{rd}(#1,#2)}}
\newcommand{\store}[3]{\ensuremath{\mathsf{wr}(#1,#2,#3)}}
\newcommand{\InsO}[0]{\ensuremath{\mathsf{ins}}}
\newcommand{\Ins}[2]{\ensuremath{\InsO(#1,#2)}}
\newcommand{\enumO}[0]{\ensuremath{\mathsf{enum}}}
\newcommand{\enum}[1]{\ensuremath{\enumO(#1)}}

\newcommand{\PRE}[1]{\ensuremath{\langle #1 \rangle}}
\newcommand{\PREB}[1]{\ensuremath{[ #1 ]}}
\newcommand{\Pre}[2]{\ensuremath{\textit{Pre}_{#1}(#2)}}
\newcommand{\pretilde}[0]{\ensuremath{\widetilde{\textit{pre}}}}
\newcommand{\post}[0]{\ensuremath{\textit{post}}}
\newcommand{\pretildeR}[0]{\ensuremath{\widetilde{\textrm{pre}}}}
\newcommand{\postR}[0]{\ensuremath{\textrm{post}}}
\newcommand{\Succ}[0]{\ensuremath{\textit{successeur}}}
\newcommand{\Men}[0]{\ensuremath{\textit{meneur}}}
\newcommand{\Gpre}[2]{\ensuremath{G_{#1}(#2)}}
\newcommand{\Fpre}[2]{\ensuremath{F_{#1}(#2)}}
\newcommand{\POST}[1]{\ensuremath{[ #1 ]^{o}}}
\newcommand{\sem}[1]{\ensuremath{\textit{rel}_{#1}}}
\newcommand{\quantToSubst}[0]{\ensuremath{\textit{q2s}}}

\newcommand{\wpr}[0]{\ensuremath{\textit{wp}}\xspace}
\newcommand{\wlp}[0]{\ensuremath{\textit{wlp}}\xspace}
\newcommand{\weaklp}[2]{\ensuremath{\textit{wlp}_{#1}(#2)}\xspace}
\newcommand{\weakp}[2]{\ensuremath{\textit{wp}_{#1}(#2)}\xspace}
\newcommand{\prd}[2]{\ensuremath{\textit{prd}_{#1}({#2})}\xspace}

\newcommand{\begin{center}\begin{tabular}{l}\input{}\end{tabular}\end{center}}[1]{\begin{center}\begin{tabular}{l}\input{#1}\end{tabular}\end{center}}

\newcommand{\Init}[0]{\ensuremath{\textit{Init}}\xspace}
\newcommand{\Target}[0]{\ensuremath{\textit{Target}}\xspace}

\newcommand{\substs}[0]{\ensuremath{\textit{Substs}}}
\newcommand{\choice}[0]{\ensuremath{[]}}
\newcommand{\paral}[0]{\ensuremath{\mid\mid}}
\newcommand{\card}[1]{\ensuremath{\textit{card}(#1)}}

\newcommand{\fs}[1]{\ensuremath{\mathsf{#1}}}
\newcommand{\sort}[1]{\mbox{\textsc{#1}}}
\newcommand{\dropE}[0]{\ensuremath{\mathsf{dropExistential}}}
\newcommand{\renameF}[0]{\ensuremath{\mathsf{renameFormula}}}
\newcommand{\ppnf}[0]{\ensuremath{\mathsf{de}}}
\newcommand{\apnx}[0]{\ensuremath{\mathsf{mini}}}
\newcommand{\apnxq}[0]{\ensuremath{\mathsf{m}_q}}
\newcommand{\sqs}[0]{\ensuremath{\mathsf{rf}}}
\newcommand{\theoryT}[0]{\ensuremath{{\cal T}}}
\newcommand{\theoryU}[0]{\ensuremath{{\cal T}}}
\newcommand{\vars}[0]{\ensuremath{\mathcal{V}}}

\newcommand{\latice}[0]{\ensuremath{\textit{exp}^{A}}}
\newcommand{\laticeB}[0]{\ensuremath{\latice(B_1,\ldots,B_l)}}
\newcommand{\concretisation}[0]{\ensuremath{\gamma}}
\newcommand{\abstraction}[0]{\ensuremath{\alpha}}

\newcommand{\sfunc}[0]{\ensuremath{\textit{SFUNC}}}
\newcommand{\override}[2]{\ensuremath{#1 \bdres #2 }}
\newcommand{\Partition}[1]{\ensuremath{\mathbbm{P}}(#1)}
\newcommand{\fc}[1]{\ensuremath{\widetilde{#1}}} 
\newcommand{\nul}[0]{\ensuremath{\mathsf{null}}} 
\newcommand{\overR}[3]{\ensuremath {\mathsf{over}(#1,#2,#3)}}
\newcommand{\image}[2]{\ensuremath {\sf{image}(#1,#2)}}


\newcommand{\ind}[0]{\textit{index}}
\newcommand{\val}[0]{\textit{val}}
\newcommand{\arr}[0]{\textit{array}}
\newcommand{\iSet}[0]{\ensuremath{\textit{S}_{i}}\xspace}
\newcommand{\eSet}[0]{\ensuremath{\textit{S}_{e}}\xspace}
\newcommand{\ARR}[0]{\textit{ARRAY}\xspace}
\newcommand{\IND}[0]{\textit{INDEX}\xspace}
\newcommand{\VAL}[0]{\textit{VALUE}\xspace}
\newcommand{\constArr}{\ensuremath{\mathsf{const}}\xspace}
\newcommand{\im}[0]{\ensuremath{\mathsf{im}}}

\newcommand{\eJ}{\EuScript J}
\newcommand{\eF}{\EuScript F}
\newcommand{\eL}{\EuScript L}
\newcommand{\eT}{\EuScript T}
\newcommand{\eV}{\EuScript V}
\newcommand{\eP}{\EuScript P}
\renewcommand{\le}{\leqslant}
\renewcommand{\ge}{\geqslant}

\newcommand{\any}[0]{\textrm{@}}

\newcommand{\str}[0]{\textit{Strength}}

\newcommand{\sortSet}{\mathcal{S}}
\newcommand{\sortVars}{\mathcal{X}_s}
\newcommand{\sortFunctionSet}{\mathcal{F}_s} 
\newcommand{\varSet}{\mathcal{X}}
\newcommand{\functionSet}{\mathcal{F}} 

\def\unique{\sort{u}}

\title{Steganography: a Class of Algorithms having Secure Properties}

\author{Jacques M. Bahi, Jean-Fran\c{c}ois Couchot, and Christophe Guyeux*\thanks{* Authors in alphabetic order}\\
University of Franche-Comté, Computer Science Laboratory, Belfort, France\\
Email: \{jacques.bahi, jean-francois.couchot, christophe.guyeux\}@univ-fcomte.fr}

\maketitle

\begin{abstract}
Chaos-based approaches are frequently proposed
in information hiding, but without obvious justification. 
Indeed, the reason why chaos is useful to tackle with discretion, 
robustness, or security, is rarely elucidated. 
This research work presents a new class of non-blind information hiding
algorithms based on some finite domains iterations that are Devaney's 
topologically chaotic.
The approach is entirely formalized and reasons to take place into the
mathematical theory of chaos are explained.
Finally, stego-security and chaos security are consequently proven
for a large class of algorithms.  


\end{abstract}


\section{Introduction}\label{sec:intro}
Chaos-based approaches are frequently proposed to improve
the quality of schemes in information 
hiding~\cite{Wu2007,Liu07,CongJQZ06,Zhu06}.
In these works, the understanding of chaotic systems
is almost intuitive: a kind of noise-like spread system
with sensitive dependence on initial condition.
Practically, some well-known chaotic maps are used
either in the data encryption stage~\cite{Liu07,CongJQZ06}, 
in the embedding into the carrier medium,
or in both~\cite{Wu2007,Wu2007bis}.

This work focus on non-blind  binary information hiding scheme: 
the  original host  is  required  to  extract the  binary hidden
information. This  context is indeed not
as restrictive as it could primarily appear.
Firstly, it allows to prove 
the authenticity of a document sent  through the  Internet 
(the original document is stored whereas the stego content is sent). 
Secondly, Alice and Bob can establish an hidden channel into a
streaming video 
(Alice and Bob both have the same movie, and  Alice hide information into 
the frame number $k$  iff the binary digit
number $k$ of its hidden message is  1).
Thirdly, based on a similar idea, a same
given image can be marked  several times by using various secret parameters
owned both by Alice and Bob. Thus more than one bit can be embedded into a given
image  by using  dhCI  dissimulation. Lastly,  non-blind watermarking  is
useful in  network's anonymity and intrusion  detection \cite{Houmansadr09}, and
to protect digital data sending through the Internet \cite{P1150442004}.

Methods referenced above are almost based on two
fundamental chaotic maps, namely the Chebychev and logistic maps, which range in $\mathbb{R}$. 
To avoid justifying that functions which are chaotic in $\mathbb{R}$
still remain chaotic in the computing representation (\textit{i.e.},
floating numbers) we argue that functions should be iterated on finite domains.
Boolean discrete-time dynamical systems (BS) are thus iterated.

Furthermore, previously referenced works often focus on  
discretion and/or robustness properties, but they do not consider security.  
As far as we know, stego-security~\cite{Cayre2008} and chaos-security 
have only been proven 
on the spread spectrum watermarking~\cite{Cox97securespread},
and on
the dhCI algorithm~\cite{gfb10:ip}, which is notably based on iterating  
the negation function.
We argue that other functions can provide algorithms as secure as the dhCI one.
This work generalizes thus this latter algorithm and formalizes all
its stages. Due to this formalization, we address the proofs of the two
security properties for a large class of steganography approaches.

This research work is organized as follows.
Section~\ref{sec:bs} first recalls the BS context.
The new class of algorithms, 
which is the first contribution,
is firstly introduced in Sec.~\ref{sec:formalization}.
Section~\ref{sec:security} shows how secure  
is our approach: this is the second contribution of the present paper.
Instances of algorithms guaranteeing that desired properties
are presented in Sec.~\ref{instantiating}.
Discussion, conclusive remarks, and perspectives are given in the 
final section.

\section{Boolean Discrete-Time Dynamical Systems}\label{sec:bs}
In this section, we first give some recalls on Boolean discrete
dynamical Systems (BS). 
With this material, next sections 
formalize the information hiding algorithms based on chaotic iterations.

Let $n$ be a positive integer. A Boolean discrete-time 
network is a discrete dynamical
system defined from a {\emph{Boolean map}}
$f:\Bool^n\to\Bool^n$ s.t. 
\[
  x=(x_1,\dots,x_n)\mapsto f(x)=(f_1(x),\dots,f_n(x)),
\]
and an {\emph{iteration scheme}} (\textit{e.g.}, parallel, serial,
asynchronous\ldots). 
With the parallel iteration scheme, 
the dynamics of the system are described by $x^{t+1}=f(x^t)$
where $x^0 \in \Bool^n$.
Let thus $F_f: \llbracket1;n\rrbracket\times \Bool^{n}$ to $\Bool^n$ 
be defined by
\[
F_f(i,x)=(x_1,\dots,x_{i-1},f_i(x),x_{i+1},\dots,x_n),
\]
with the \emph{asynchronous} scheme,
the dynamics of the system are described by $x^{t+1}=F_f(s_t,x^t)$
where $x^0\in\Bool^n$ and $s$ is a {\emph{strategy}}, \textit{i.e.}, a sequence 
in $\llbracket1;n\rrbracket^\Nats$.
Notice that this scheme only modifies one element at each iteration.

Let $G_f$ be the map from $\llbracket1;n\rrbracket^\Nats\times\Bool^n$ to 
itself s.t.
\[
G_f(s,x)=(\sigma(s),F_f(s_0,x)),
\] 
where $\sigma(s)_t=s_{t+1}$ for all $t$ in $\Nats$. 
Notice that parallel iteration of $G_f$ from an initial point
$X^0=(s,x^0)$ describes the ``same dynamics'' as the asynchronous
iteration of $f$ induced by the initial point $x^0$ and the strategy
$s$.

Finally, let $f$ be a map from $\Bool^n$ to itself. The
{\emph{asynchronous iteration graph}} associated with $f$ is the
directed graph $\Gamma(f)$ defined by: the set of vertices is
$\Bool^n$; for all $x\in\Bool^n$ and $i\in \llbracket1;n\rrbracket$,
$\Gamma(f)$ contains an arc from $x$ to $F_f(i,x)$.

\section{Formalization of Steganographic Methods}
\label{sec:formalization}


The data hiding scheme presented here does not constrain media to have 
a constant size. It is indeed sufficient to provide a function and a strategy 
that may be  parametrized with the size of the elements to modify. 
The \emph{mode} and the \emph{strategy-adapter} defined below achieve 
this goal.  

\begin{definition}[Mode]
\label{def:mode}
A map $f$, which associates to any $n \in \mathds{N}$ an application 
$f_n : \mathds{B}^n \rightarrow \mathds{B}^n$, is called a \emph{mode}.
\end{definition}

For instance, the \emph{negation mode} is defined by the map that
assigns to every integer $n \in \mathds{N}^*$ the function 
${\neg}_n:\mathds{B}^n \to \mathds{B}^n, 
{\neg}_n(x_1, \hdots, x_n) \mapsto (\overline{x_1}, \hdots, \overline{x_n})$.

\begin{definition}[Strategy-Adapter]
  \label{def:strategy-adapter}
  A \emph{strategy-adapter}\index{configuration} is a function $\mathcal{S}$, 
  from $\Nats$ to the set of integer sequences, 
  that associates to $n$ a sequence 
  $S \in  \llbracket 1, n\rrbracket^\mathds{N}$.
\end{definition}

Intuitively, a strategy-adapter aims at generating a strategy 
$(S^t)^{t \in \Nats}$ where each term $S^t$ belongs to 
$\llbracket 1, n \rrbracket$.



Let us notice that the terms of $x$ that may be replaced by terms issued
from $y$ are less important than other: they could be changed 
without be perceived as such. More generally, a 
\emph{signification function} 
attaches a weight to each term defining a digital media,
w.r.t. its position $t$:

\begin{definition}[Signification function]
A \emph{signification function} is a real sequence 
$(u^k)^{k \in \Nats}$. 
\end{definition}



For instance, let us consider a set of    
grayscale images stored into 8 bits gray levels.
In that context, we consider 
$u^k = 8 - (k  \mod  8)$  to be the $k$-th term of a signification function 
$(u^k)^{k \in \Nats}$. 

\begin{definition}[Significance of coefficients]\label{def:msc,lsc}
Let $(u^k)^{k \in \Nats}$ be a signification function, 
$m$ and $M$ be two reals s.t. $m < M$. Then 
the \emph{most significant coefficients (MSCs)} of $x$ is the finite 
  vector $u_M$, 
the \emph{least significant coefficients (LSCs)} of $x$ is the 
finite vector $u_m$, and 
the \emph{passive coefficients} of $x$ is the finite vector $u_p$ such that:
\begin{eqnarray*}
  u_M &=& \left( k ~ \big|~ k \in \mathds{N} \textrm{ and } u^k 
    \geqslant M \textrm{ and }  k \le \mid x \mid \right) \\
  u_m &=& \left( k ~ \big|~ k \in \mathds{N} \textrm{ and } u^k 
  \le m \textrm{ and }  k \le \mid x \mid \right) \\
   u_p &=& \left( k ~ \big|~ k \in \mathds{N} \textrm{ and } 
u^k \in ]m;M[ \textrm{ and }  k \le \mid x \mid \right)
\end{eqnarray*}
 \end{definition}

For a given host content $x$,
MSCs are then ranks of $x$  that describe the relevant part
of the image, whereas LSCs translate its less significant parts.
We are then ready to decompose an host $x$ into its coefficients and 
then to recompose it. Next definitions formalize these two steps. 

\begin{definition}[Decomposition function]
Let $(u^k)^{k \in \Nats}$ be a signification function, 
$\mathfrak{B}$ the set of finite binary sequences,
$\mathfrak{N}$ the set of finite integer sequences, 
$m$ and $M$ be two reals s.t. $m < M$.  
Any host $x$ may be decomposed into 
\[
(u_M,u_m,u_p,\phi_{M},\phi_{m},\phi_{p})
\in
\mathfrak{N} \times 
\mathfrak{N} \times 
\mathfrak{N} \times 
\mathfrak{B} \times 
\mathfrak{B} \times 
\mathfrak{B} 
\]
where
\begin{itemize}
\item $u_M$, $u_m$, and $u_p$ are coefficients defined in Definition  
\ref{def:msc,lsc};
\item $\phi_{M} = \left( x^{u^1_M}, x^{u^2_M}, \ldots,x^{u^{|u_M|}_M}\right)$;
 \item $\phi_{m} = \left( x^{u^1_m}, x^{u^2_m}, \ldots,x^{u^{|u_m|}_m} \right)$;
 \item $\phi_{p} =\left( x^{u^1_p}, x^{u^2_p}, \ldots,x^{u^{|u_p|}_p}\right) $.
 \end{itemize}
The function that associates the decomposed host to any digital host is 
the \emph{decomposition function}. It is 
further referred as $\textit{dec}(u,m,M)$ since it is parametrized by 
$u$, $m$ and $M$. Notice that $u$ is a shortcut for $(u^k)^{k \in \Nats}$.
\end{definition}

\begin{definition}[Recomposition]
Let 
$(u_M,u_m,u_p,\phi_{M},\phi_{m},\phi_{p}) \in 
\mathfrak{N} \times 
\mathfrak{N} \times 
\mathfrak{N} \times 
\mathfrak{B} \times 
\mathfrak{B} \times 
\mathfrak{B} 
$ s.t.
\begin{itemize}
\item the sets of elements in $u_M$, elements in $u_m$, and 
elements in $u_p$ are a partition of $\llbracket 1, n\rrbracket$;
\item $|u_M| = |\varphi_M|$, $|u_m| = |\varphi_m|$, and $|u_p| = |\varphi_p|$.  
\end{itemize}
One may associate the vector 
\[
x = 
\sum_{i=1}^{|u_M|} \varphi^i_M . e_{{u^i_M}} +  
\sum_{i=1}^{|u_m|} \varphi^i_m .e_{{u^i_m}} +  
\sum_{i=1}^{|u_p|} \varphi^i_p. e_{{u^i_p}} 
\]
\noindent where 
$(e_i)_{i \in \mathds{N}}$ is the usual basis of the $\mathds{R}-$vectorial space $\left(\mathds{R}^\mathds{N}, +, .\right)$.
The function that associates $x$ to any 
$(u_M,u_m,u_p,\phi_{M},\phi_{m},\phi_{p})$ following the above constraints 
is called the \emph{recomposition function}.
\end{definition}

The embedding consists in the replacement of the values of 
$\phi_{m}$ of $x$'s LSCs  by $y$. 
It then composes the two decomposition and
recomposition functions seen previously. More formally:

\begin{definition}[Embedding media]
Let $\textit{dec}(u,m,M)$ be a decomposition function,
$x$ be a host content,
$(u_M,u_m,u_p,\phi_{M},\phi_{m},\phi_{p})$ be its image by $\textit{dec}(u,m,M)$, 
and $y$ be a digital media of size $|u_m|$.
The digital media $z$ resulting on the embedding of $y$ into $x$ is 
%
the image of $(u_M,u_m,u_p,\phi_{M},y,\phi_{p})$
by  the recomposition function $\textit{rec}$.
\end{definition}

Let us then define the \emph{dhCI} information hiding scheme:

\begin{definition}[Data hiding dhCI]
 \label{def:dhCI}
Let $\textit{dec}(u,m,M)$ be a decomposition function,
$f$ be a mode, 
$\mathcal{S}$ be a strategy adapter,
$x$ be an host content,\linebreak
$(u_M,u_m,u_p,\phi_{M},\phi_{m},\phi_{p})$ 
be its image by $\textit{dec}(u,m,M)$,
$q$ be a positive natural number,  
and $y$ be a digital media of size $l=|u_m|$.

The \emph{dhCI dissimulation}  maps any
$(x,y)$  to the digital media $z$ resulting on the embedding of
$\hat{y}$ into $x$, s.t.

\begin{itemize}
\item We instantiate the mode $f$ with parameter $l=|u_m|$, leading to 
  the function $f_{l}:\Bool^{l} \rightarrow \Bool^{l}$.
\item We instantiate the strategy adapter $\mathcal{S}$ 
with parameter $y$ (and some other ones eventually). 
This instantiation leads to the strategy $S_y \in \llbracket 1;l\rrbracket ^{\Nats}$.

\item We iterate $G_{f_l}$ with initial configuration $(S_y,\phi_{m})$.
\item $\hat{y}$  is the $q$-th term.
\end{itemize}
\end{definition}

To summarize, iterations are realized on the LSCs of the
host content
(the mode gives the iterate function,  
the strategy-adapter gives its strategy), 
and the last computed configuration is re-injected into the host content, 
in place of the former LSCs.


\begin{figure}[ht]
\centering
\includegraphics[width=8.5cm]{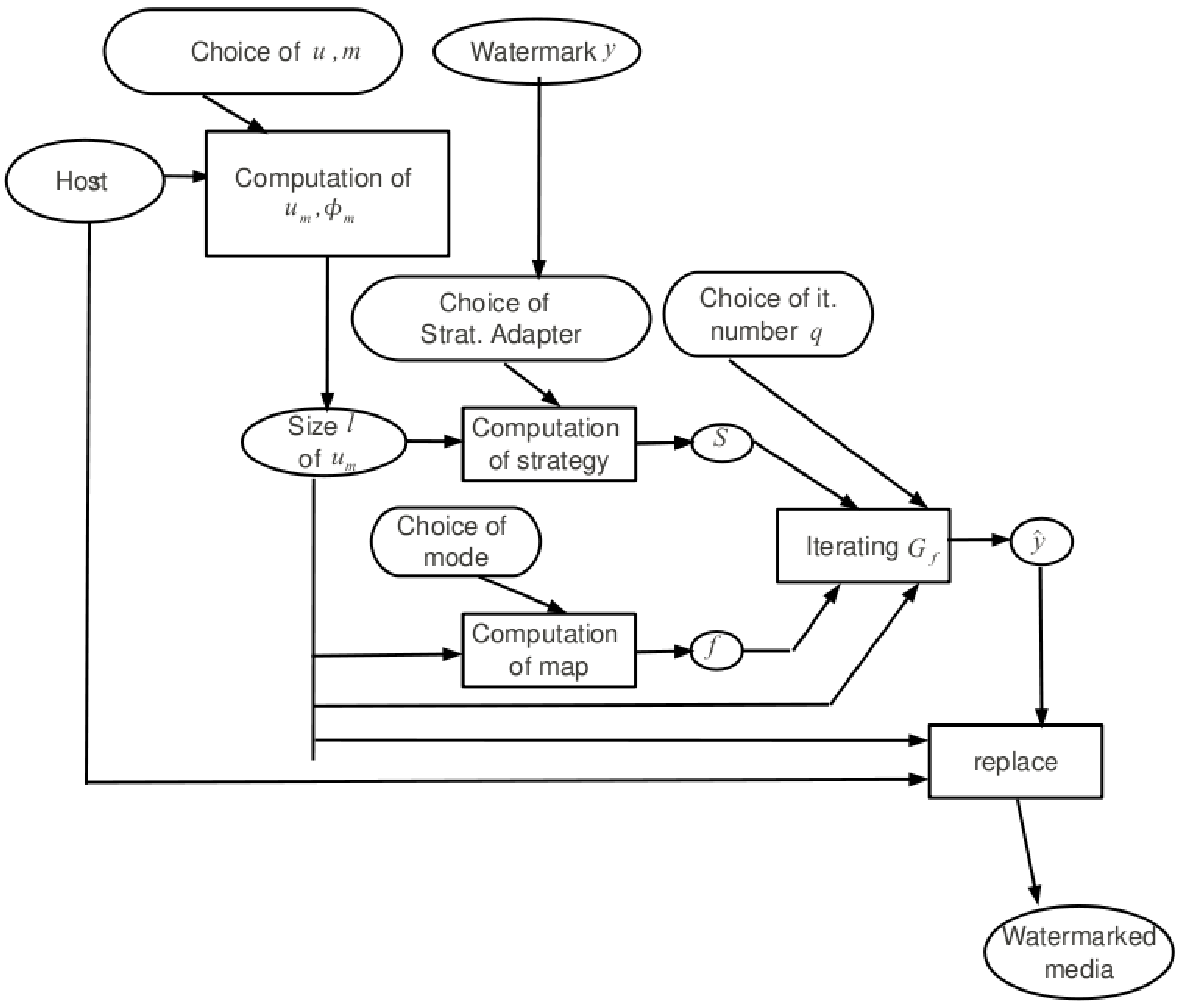}
\vspace{-3em}
\caption{The dhCI dissimulation scheme}
\label{fig:organigramme}
\end{figure}



We are then left to show how to formally check
whether a given digital media $z$ 
results from the dissimulation of $y$ into the digital media $x$. 

\begin{definition}[Marked content]
Let $\textit{dec}(u,m,M)$ be a decomposition function,
$f$ be a mode, 
$\mathcal{S}$ be a strategy adapter, 
$q$ be a positive natural number,  
and  
$y$ be a digital media, 
$(u_M,u_m,u_p,\phi_{M},\phi_{m},\phi_{p})$ be the 
image by $\textit{dec}(u,m,M)$  of  a digital media $x$. 
Then $z$ is \emph{marked} with $y$ if
the image by $\textit{dec}(u,m,M)$ of $z$ is 
$(u_M,u_m,u_p,\phi_{M},\hat{y},\phi_{p})$ where 
$\hat{y}$ is the right member of $G_{f_l}^q(S_y,\phi_{m})$.
\end{definition}

Various decision strategies are obviously  possible to determine whether a given
image $z$ is  marked or not, depending  on the eventuality
that the considered image may have  been attacked.
For example, a  similarity percentage between $x$
and  $z$ can  be  computed, and  the result  can  be compared  to a  given
threshold. Other  possibilities are the use of  ROC curves or
the definition of a null hypothesis problem.
The next section recalls some security properties and shows how the 
\emph{dhCI dissimulation} algorithm verifies them.

\section{Security Analysis}\label{sec:security}


Stego-security~\cite{Cayre2008} is  the  highest security
class in Watermark-Only  Attack setup.
Let $\mathds{K}$ be the set of embedding keys, $p(X)$ the probabilistic model of
$N_0$ initial  host contents,  and $p(Y|K)$ the  probabilistic model  of $N_0$
marked contents s.t. each host  content has  been marked
with the same key $K$ and the same embedding function.

\begin{definition}[Stego-Security~\cite{Cayre2008}]
\label{Def:Stego-security}  The embedding  function  is \emph{stego-secure}
if  $\forall K \in \mathds{K}, p(Y|K)=p(X)$ is established.
\end{definition}


Let us prove that,
\begin{theorem}\label{th:stego}
Let $\epsilon$ be positive,
$l$ be any size of LSCs, 
$X   \sim \mathbf{U}\left(\mathbb{B}^l\right)$,
$f_l$ be an image mode s.t. 
$\Gamma(f_l)$ is strongly connected and 
the Markov matrix associated to $f_l$ 
is doubly stochastic. 
In the instantiated \emph{dhCI dissimulation} algorithm 
with any uniformly distributed (u.d.) strategy-adapter 
which is independent from $X$,  
there exists some positive natural number $q$ s.t.
$|p(X^q)- p(X)| < \epsilon$. 
\end{theorem}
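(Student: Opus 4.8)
The plan is to recast the data-dependent part of the dhCI map as a time-homogeneous Markov chain on the finite space $\mathbb{B}^l$ and to read off its invariant law from the two hypotheses on $f_l$. Write $X^t$ for the $\mathbb{B}^l$-component (the right member) of $G_{f_l}^t(S_y,\phi_m)$, with $X^0=X$. Since the strategy-adapter is uniformly distributed and independent from $X$, each drawn index $S_y^t$ is uniform on $\llbracket 1;l\rrbracket$ and independent of $X^t$, so the recursion
\[
X^{t+1}=F_{f_l}(S_y^t,X^t)
\]
is governed by a single transition matrix $M$ with entries $M_{x,y}=\tfrac{1}{l}\,\#\{\,i\in\llbracket 1;l\rrbracket \mid F_{f_l}(i,x)=y\,\}$, which is exactly the Markov matrix associated to $f_l$. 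First I would check that $M$ is stochastic: its rows sum to $1$, the $l$ choices of $i$ sharing the total mass. Hence $(X^t)_{t\in\Nats}$ is a homogeneous chain and, writing laws as row vectors, $p(X^q)=p(X)\,M^q$.

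The core observation is that the doubly-stochastic hypothesis makes the uniform law stationary. Writing $\pi_u=(2^{-l},\dots,2^{-l})$, doubly stochastic means the columns of $M$ also sum to $1$, whence $(\pi_u M)_y=2^{-l}\sum_x M_{x,y}=2^{-l}=(\pi_u)_y$, i.e. $\pi_u M=\pi_u$. Because the initial law is $p(X)=\pi_u$ by the assumption $X\sim\mathbf{U}(\mathbb{B}^l)$, an immediate induction gives $p(X^q)=\pi_u\,M^q=\pi_u$ for every $q$. Consequently $\lvert p(X^q)-p(X)\rvert=0<\epsilon$ for any chosen rank $q$, which is precisely the claimed stego-security estimate (indeed an exact equality rather than an approximation).

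The strong-connectivity hypothesis enters to make the statement robust and to connect it with the surrounding chaos-security discussion: it says $M$ is irreducible, so $\pi_u$ is the \emph{unique} stationary distribution, and it guarantees the chain genuinely mixes over all of $\mathbb{B}^l$ rather than being trapped in a sub-cube. I would resist proving the result by invoking a Markov convergence theorem on an arbitrary initial law: for modes such as the negation, $\Gamma(f_l)$ is the bipartite hypercube and $M$ is periodic, so $p(X)\,M^q$ need not converge to $\pi_u$; the stationarity-preservation route above sidesteps this entirely, precisely because the run starts at $\pi_u$. I therefore expect the only delicate point to be the modeling step — arguing that \emph{u.d.\ and independent from $X$} really yields independent uniform index draws at every iteration, so that the process is time-homogeneous with transition matrix equal to the Markov matrix of $f_l$. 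Once this is granted, the doubly-stochastic hypothesis pins the invariant law to $\pi_u$ and the conclusion is immediate.
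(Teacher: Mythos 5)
Your proposal is correct, and it takes a genuinely different --- and in fact more robust --- route than the paper. The paper derives exactly the same recursion $\pi^{t+1}=\pi^{t}M$ that you do (via the decimal coding of $\mathds{B}^l$ and the uniformity/independence of the strategy), but then it proceeds by convergence: it uses the strong connectivity of $\Gamma(f_l)$ to argue that $M$ is a \emph{regular} stochastic matrix, invokes the convergence theorem for regular Markov chains to get a unique stationary vector $\pi$, identifies $\pi$ with the uniform vector via double stochasticity, and concludes that $|\pi^q-\pi|<\epsilon$ for $q$ large enough. You skip the regularity and convergence machinery entirely: since the initial law is uniform by hypothesis and double stochasticity makes the uniform law stationary, an induction gives $p(X^q)=p(X)$ exactly, for every $q$. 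What each approach buys: the paper's route, when it is valid, yields the stronger statement that the chain forgets \emph{any} initial law (useful if the host's LSCs are only approximately uniform), whereas yours yields an exact equality but only from the uniform start. Notably, your worry about periodicity is well-founded and pinpoints an actual flaw in the paper's argument: strong connectivity gives irreducibility but not aperiodicity, and the paper's step asserting that $M_{ij}^{k_{ij}}>0$ implies $M_{ij}^{l\times k_{ij}}>0$ for all multiples fails on bipartite graphs --- precisely the situation of the negation mode underlying dhCI, whose asynchronous iteration graph is the hypercube, so that $M^k_{ij}>0$ forces $k$ to have the parity of the Hamming distance between $i$ and $j$. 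Hence your stationarity argument does not merely simplify the paper's proof; under the stated hypotheses it repairs it.
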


\begin{proof}   
Let $\textit{deci}$ be the bijection between $\Bool^{l}$ and 
$\llbracket 0, 2^l-1 \rrbracket$ that associates the decimal value
of any  binary number in $\Bool^{l}$.
The probability $p(X^t) = (p(X^t= e_0),\dots,p(X^t= e_{2^l-1}))$ for $e_j \in \Bool^{l}$ is thus equal to 
$(p(\textit{deci}(X^t)= 0,\dots,p(\textit{deci}(X^t)= 2^l-1))$ further denoted by $\pi^t$.
Let $i \in \llbracket 0, 2^l -1 \rrbracket$, 
the probability $p(\textit{deci}(X^{t+1})= i)$  is 
\[
 \sum\limits^{2^l-1}_{j=0}  
\sum\limits^{l}_{k=1} 
p(\textit{deci}(X^{t}) = j , S^t = k , i =_k j , f_k(j) = i_k ) 
\]
\noindent 
where $ i =_k j $ is true iff the binary representations of 
$i$ and $j$ may only differ for the  $k$-th element,
and where 
$i_k$ abusively denotes the $k$-th element of the binary representation of 
$i$.

Next, due to the proposition's hypotheses on the strategy,
$p(\textit{deci}(X^t) = j , S^t = k , i =_k j, f_k(j) = i_k )$ is equal to  
$\frac{1}{l}.p(\textit{deci}(X^t) = j ,  i =_k j, f_k(j) = i_k)$.
Finally, since $i =_k j$ are $f_k(j) = i_k$ are constant during the 
iterative process  and thus does not depend on $X^t$, we have 
\[
\pi^{t+1}_i = \sum\limits^{2^l-1}_{j=0}
\pi^t_j.\frac{1}{l}  
\sum\limits^{l}_{k=1} 
p(i =_k j, f_k(j) = i_k ).
\]

Since 
$\frac{1}{l}  
\sum\limits^{l}_{k=1} 
p(i =_k j, f_k(j) = i_k ) 
$ is equal to $M_{ji}$ where  $M$ is the Markov matrix associated to
 $f_l$ we thus have
\[
\pi^{t+1}_i = \sum\limits^{2^l-1}_{j=0}
\pi^t_j. M_{ji} \textrm{ and thus }
\pi^{t+1} = \pi^{t} M.
\]


First of all, 
since the graph $\Gamma(f)$ is strongly connected,
then for all vertices $i$ and $j$, a path can
be  found to  reach $j$  from $i$  in at  most $2^l$  steps.  
There  exists thus $k_{ij} \in \llbracket 1,  2^l \rrbracket$ s.t.
${M}_{ij}^{k_{ij}}>0$.  
As all the multiples $l \times k_{ij}$ of $k_{ij}$ are such that 
${M}_{ij}^{l\times  k_{ij}}>0$, 
we can  conclude that, if
$k$ is the least common multiple of $\{k_{ij}  \big/ i,j  \in \llbracket 1,  2^l \rrbracket  \}$ thus 
$\forall i,j  \in \llbracket  1, 2^l \rrbracket,  {M}_{ij}^{k}>0$ and thus 
$M$ is a regular stochastic matrix.

Let us now recall the following stochastic matrix theorem:
\begin{theorem}[Stochastic Matrix]
  If $M$ is a regular stochastic matrix, then $M$ 
  has an unique stationary  probability vector $\pi$. Moreover, 
  if $\pi^0$ is any initial probability vector and 
  $\pi^{t+1} = \pi^t.M $ for $t = 0, 1,\dots$ then the Markov chain $\pi^t$
  converges to $\pi$ as $t$ tends to infinity.
\end{theorem}

Thanks to this theorem, $M$ 
has an unique stationary  probability vector $\pi$. 
By hypothesis, since $M$ is doubly stochastic we have 
$(\frac{1}{2^l},\dots,\frac{1}{2^l}) = (\frac{1}{2^l},\dots,\frac{1}{2^l})M$
and thus $\pi =  (\frac{1}{2^l},\dots,\frac{1}{2^l})$.
Due to the matrix theorem, there exists some 
$q$ s.t. 
$|\pi^q- \pi| < \epsilon$
and the proof is established.
 \end{proof}
Since $p(Y| K)$ is $p(X^q)$ the method is then stego-secure.

Let us focus now on chaos-security properties.
An information hiding scheme $S$ is said to have such a property
if its iterative process has a chaotic behavior, 
as defined by Devaney, on this topological space.
This problem has been reduced in~\cite{GuyeuxThese10} 
which provides the  following theorem.

\begin{theorem}
\label{Th:Caracterisation des  IC chaotiques} Functions $f  : \mathds{B}^{n} \to
\mathds{B}^{n}$ such that  $G_f$ is chaotic according to  Devaney, are functions
such that the graph $\Gamma(f)$ is strongly connected.
\end{theorem}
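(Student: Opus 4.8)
The plan is to establish the characterisation by verifying the two defining ingredients of Devaney chaos for $G_f$ acting on the phase space $\mathcal{X}=\llbracket 1,n\rrbracket^{\Nats}\times\Bool^n$, which we equip with the product of the Cantor-type distance on strategies and the discrete distance on the finite component $\Bool^n$. Recall that Devaney chaos amounts to topological transitivity together with density of periodic points, sensitivity to initial conditions then being automatic on the infinite space $\mathcal{X}$ by the classical result of Banks, Brooks, Cairns, Davis and Stacey. The single observation linking the dynamics to the combinatorics of $\Gamma(f)$ is that, since $G_f(s,x)=(\sigma(s),F_f(s_0,x))$ and $\Gamma(f)$ carries an arc from $x$ to $F_f(i,x)$ for each $i$, the Boolean part of $G_f^k(s,x)$ is exactly the vertex reached from $x$ by the walk labelled $s_0,\dots,s_{k-1}$; hence the configurations occurring along orbits issued from $x$ are precisely the vertices reachable from $x$ in $\Gamma(f)$. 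Finally, since $\Bool^n$ is discrete, for $\epsilon$ small enough a point lies in the $\epsilon$-ball around $(s,x)$ only if its Boolean part equals $x$ and its strategy agrees with $s$ on a long enough prefix.

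First I would prove that strong connectivity of $\Gamma(f)$ yields chaos. For transitivity, given $(s,x)$, $(\check s,\check x)$ and $\epsilon>0$, I would pick a strategy $S$ starting with enough terms of $s$ to keep $(S,x)$ inside the $\epsilon$-ball around $(s,x)$; after this prefix of forced steps the configuration sits at some vertex $x_0$, and strong connectivity furnishes a finite walk from $x_0$ to $\check x$ whose arc labels I append to $S$, so that the Boolean part lands \emph{exactly} on $\check x$ at some time $k$; letting the remaining terms of $S$ equal $\check s$ then gives $G_f^k(S,x)=(\check s,\check x)$, so the orbit of $(S,x)$ meets every neighbourhood of $(\check s,\check x)$. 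For density of periodic points, given $(s,x)$ and $\epsilon>0$ I would build a purely $p$-periodic strategy $S$ whose period block consists of enough terms of $s$ followed by the labels of a cycle of $\Gamma(f)$ returning the configuration to $x$ (such a cycle exists by strong connectivity); then $(S,x)$ is a periodic point of $G_f$ lying within $\epsilon$ of $(s,x)$.

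For the implication actually displayed in the statement I would argue by contraposition: if $\Gamma(f)$ is not strongly connected there are vertices $a,b$ with no directed path from $a$ to $b$. Let $U$ and $V$ be the sets of points whose Boolean parts are $a$ and $b$ respectively; by discreteness both are open. By the linking observation above, the Boolean part of every iterate of a point of $U$ remains among the vertices reachable from $a$, hence never equals $b$, so $G_f^k(U)\cap V=\emptyset$ for all $k$. Thus $G_f$ is not transitive, in particular not Devaney-chaotic, which shows that chaos forces $\Gamma(f)$ to be strongly connected.

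The continuity of $G_f$ for this metric, which Devaney's framework requires, is routine and I would merely record it: the Boolean part is locally constant, and altering finitely many strategy terms changes the orbit only from the corresponding time on. The real difficulty is the transitivity construction, where one must simultaneously keep the approximant $\epsilon$-close to $(s,x)$, hit the target configuration \emph{exactly}, and still leave room at the tail of $S$ to copy $\check s$. Exact hitting is possible only because $\Bool^n$ is discrete and strong connectivity supplies a genuine walk in $\Gamma(f)$ rather than mere approximate steering --- this is exactly where the hypothesis enters --- and bounding every walk length by $2^n$ keeps the hitting time $k$ and the number of strategy terms to be matched finite and explicit.
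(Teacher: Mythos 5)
There is nothing in the paper to compare your proof against: the theorem is not proved there at all, but imported from the cited reference \cite{GuyeuxThese10}, and your argument is essentially the standard one from that line of work. Your proof is correct. The key bridge --- that the Boolean component of $G_f^k(s,x)$ is the endpoint of the walk in $\Gamma(f)$ labelled $s_0,\dots,s_{k-1}$, and that for the Cantor-type metric a small ball fixes the Boolean part and a prefix of the strategy --- is exactly the observation on which the known proof rests; your transitivity construction (prefix of $s$, then a connecting walk supplied by strong connectivity, then a copy of $\check{s}$), your periodic strategy whose block returns the configuration to $x$, the appeal to Banks et al.\ for sensitivity, and the contrapositive via failure of transitivity for the converse, all coincide with it.

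One judgement call of yours deserves to be highlighted as the right one: the statement, read literally, asserts only that chaos forces strong connectivity, but the paper's corollary on chaos-security uses the \emph{opposite} implication (strong connectivity of $\Gamma(f_l)$ implies chaos of $G_{f_l}$), so the ``characterisation'' must be an equivalence and your constructive half is the one actually needed downstream. Two small caveats: in the degenerate case $n=1$ the phase space $\llbracket 1,n\rrbracket^{\Nats}\times\Bool^n$ is finite, so the Banks-et-al.\ route to sensitivity is unavailable and the equivalence should be read for $n\geqslant 2$ (irrelevant for the application); and your claim that every connecting walk can be bounded in length by $2^n$ is correct but unnecessary --- finiteness of $\Bool^n$ already gives finite walks, which is all the construction requires.
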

\noindent We immediatly deduce:
\begin{corollary}
All the \emph{dhCI dissimulation} algorithms following hypotheses of 
theorem~\ref{th:stego} are chaos-secure.
\end{corollary}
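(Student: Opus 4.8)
The plan is to chain the two results that immediately precede the corollary. The corollary asserts that every \emph{dhCI dissimulation} algorithm satisfying the hypotheses of Theorem~\ref{th:stego} is chaos-secure; by definition, chaos-security means that the iterative process driving the scheme, namely the iteration of $G_{f_l}$, is chaotic in the sense of Devaney. So the goal reduces to exhibiting, for each such algorithm, that $G_{f_l}$ is Devaney-chaotic on the relevant topological space.

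First I would invoke Theorem~\ref{Th:Caracterisation des IC chaotiques}, which characterizes exactly those $f : \Bool^n \to \Bool^n$ for which $G_f$ is chaotic according to Devaney: they are precisely the functions whose asynchronous iteration graph $\Gamma(f)$ is strongly connected. This theorem does the heavy topological lifting for us, so the remaining task is purely to verify that its single hypothesis is met by the functions arising in Theorem~\ref{th:stego}.

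The key observation is then a direct comparison of hypotheses. Theorem~\ref{th:stego} requires, for the image mode $f_l$, that $\Gamma(f_l)$ be strongly connected (together with the doubly-stochastic condition on the associated Markov matrix, which is used only for stego-security and is irrelevant here). But strong connectivity of $\Gamma(f_l)$ is exactly the hypothesis of Theorem~\ref{Th:Caracterisation des IC chaotiques}. Hence, for every $l$, the function $f_l$ satisfies that characterization, and therefore $G_{f_l}$ is chaotic according to Devaney. I would phrase this as: the hypotheses of Theorem~\ref{th:stego} entail strong connectivity of $\Gamma(f_l)$, which by Theorem~\ref{Th:Caracterisation des IC chaotiques} yields that $G_{f_l}$ is Devaney-chaotic, and since the \emph{dhCI} iterative process is precisely the iteration of $G_{f_l}$, the scheme is chaos-secure.

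I do not expect a genuine obstacle here, since the corollary is a one-line deduction. The only subtlety worth flagging is the matching of spaces and notation: Theorem~\ref{Th:Caracterisation des IC chaotiques} is stated for a generic dimension $n$, whereas the dhCI scheme iterates $G_{f_l}$ on $\llbracket 1;l\rrbracket^{\Nats}\times\Bool^l$ with $l=|u_m|$ the LSC size, so the harder part (if any) is merely confirming that the instantiation $n:=l$ lands us in exactly the setting of the characterization theorem and that ``chaos-security'' of the scheme is defined as Devaney-chaos of this very $G_{f_l}$. Once that identification is made explicit, the corollary follows immediately.
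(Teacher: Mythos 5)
Your proposal is correct and matches the paper's own reasoning exactly: the paper deduces the corollary immediately from Theorem~\ref{Th:Caracterisation des IC chaotiques}, since the strong connectivity of $\Gamma(f_l)$ assumed in Theorem~\ref{th:stego} is precisely the characterizing condition for $G_{f_l}$ to be Devaney-chaotic, which is what chaos-security means for the scheme. Your additional remark about instantiating $n:=l$ is a fair point of care, but the paper treats it as implicit and gives no further argument.
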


\section{Instantiation of  Steganographic Methods}
Theorem~\ref{th:stego} relies on a u.d.   
strategy-adapter that is independent from the cover, and on an image mode 
$f_l$ whose iteration graph  $\Gamma(f_l)$ is strongly
connected and whose  Markov matrix 
is doubly stochastic.

The CIIS strategy adapter~\cite{gfb10:ip} has the required 
properties: it does not depend on the cover,
and the proof that its outputs are u.d. 
on $\llbracket 1, l \rrbracket$ 
is left as an exercise for the reader
(a u.d. repartition is generated by the  piecewise linear chaotic maps 
and is preserved by the iterative process).
Finally, \cite{wcbg11:ip} has presented an iterative approach to 
generate image modes $f_l$ such that  
$\Gamma(f_l)$ is strongly connected.
Among these maps, it is obvious to check which verifies or not
the doubly stochastic constrain.

\label{instantiating}






\section{Conclusion}\label{sec:concl}
This work has presented a new class of information hiding
algorithms which generalizes 
algorithm~\cite{gfb10:ip} reduced to the negation mode.
Its complete formalization has allowed to prove the stego-security 
and chaos security properties.
As far as we know, this is the first time a whole class of algorithm 
has been proven to have these two properties. 

In future work, our intention is to study the robustness of this class of 
dhCI dissimulation schemes.
We are to find the optimized parameters (modes, stretegy adapters, signification coefficients, iterations numbers\ldots) giving 
the strongest robustness 
(depending on the chosen representation domain), theoretically and practically
by realizing comprehensive simulations.
Finally these algorithms will be compared to other existing ones, among other
things by regarding whether these algorithms are chaotic or not.

\bibliographystyle{IEEEtran}
\bibliography{abbrev,mabase,biblioand}

\end{document}